\tikzstyle{msg}=[->,black,>=latex]
\tikzstyle{rubber}=[|<->|]
\tikzstyle{announce}=[draw=blue,fill=blue,shape=diamond,right,minimum
\tikzstyle{decide}=[draw=red,fill=red,shape=isosceles triangle,right,minimum
\tikzstyle{cast}=[draw=green!50!black,fill=green!50!black,shape=circle,left,minimum
\newconstruct{\FOREACH}{\textbf{for each}}{\textbf{do}}{\ENDFOREACH}{}
\newcommand\With{\textbf{while}}
\newcommand\From{\textbf{from}}
\newcommand\Broadcast{\textbf{broadcast}}
\newconstruct{\UPON}{\textbf{upon}}{\textbf{do}}{\ENDUPON}{}
\newconstruct{\FUNCTION}{\textbf{Function}}{\textbf{:}}{\ENDFUNCTION}{}
 \renewcommand{\note}[2][default]{\relax}
\newcommand{\tr}[1]{}
\renewcommand{\tr}[1]{#1}
\begin{document}
%
% paper title
% can use linebreaks \\ within to get better formatting as desired
\title{The latest gossip on BFT consensus\vspace{-0.7\baselineskip}}

\author{\IEEEauthorblockN{\large Ethan Buchman, Jae Kwon and Zarko Milosevic\\}
	\IEEEauthorblockN{\large Tendermint}\\
	%\\\vspace{-0.5\baselineskip}
	\IEEEauthorblockN{September 24, 2018}
}

% make the title area
\maketitle
\vspace*{0.5em}

\begin{abstract}
The paper presents Tendermint, a new protocol for ordering events in a distributed network under adversarial conditions. More commonly known as Byzantine Fault Tolerant (BFT) consensus or atomic broadcast, the problem has attracted significant attention in recent years due to the widespread success of blockchain-based digital currencies, such as Bitcoin and Ethereum, which successfully solved the problem in a public setting without a central authority. Tendermint modernizes classic academic work on the subject and simplifies the design of the BFT algorithm by relying on a peer-to-peer gossip protocol among nodes. 
\end{abstract}

%\noindent \textbf{Keywords:} Blockchain, Byzantine Fault Tolerance, State Machine %Replication

\section{Introduction} \label{sec:tendermint}

Consensus is one of the most fundamental problems in distributed computing. It
is important because of it's role in State Machine Replication (SMR), a generic
approach for replicating services that can be modeled as a deterministic state
machine~\cite{Lam78:cacm, Sch90:survey}. The key idea of this approach is that
service replicas start in the same initial state, and then execute requests
(also called transactions) in the same order; thereby guaranteeing that
replicas stay in sync with each other. The role of consensus in the SMR
approach is ensuring that all replicas receive transactions in the same order.
Traditionally, deployments of SMR based systems are in data-center settings
(local area network), have a small number of replicas (three to seven) and are
typically part of a single administration domain (e.g., Chubby
\cite{Bur:osdi06}); therefore they handle benign (crash) failures only, as more
general forms of failure (in particular, malicious or Byzantine faults) are
considered to occur with only negligible probability.  

The success of cryptocurrencies or blockchain systems in recent years (e.g.,
\cite{Nak2012:bitcoin, But2014:ethereum}) pose a whole new set of challenges on
the design and deployment of SMR based systems: reaching agreement over wide
area network, among large number of nodes (hundreds or thousands) that are not
part of the same administration domain, and where a subset of nodes can behave
maliciously (Byzantine faults). Furthermore, contrary to the previous
data-center deployments where nodes are fully connected to each other, in
blockchain systems, a node is only connected to a subset of other nodes, so
communication is achieved by gossip-based peer-to-peer protocols. 
The new requirements demand designs and algorithms that are not necessarily
present in the classical academic literature on Byzantine fault tolerant
consensus (or SMR) systems (e.g., \cite{DLS88:jacm, CL02:tcs}) as the primary 
focus was different setup. 

In this paper we describe a novel Byzantine-fault tolerant consensus algorithm
that is the core of the BFT SMR platform called Tendermint\footnote{The
	Tendermint platform is available open source at
	https://github.com/tendermint/tendermint.}. The Tendermint platform consists of
a high-performance BFT SMR implementation written in Go, a flexible interface
for
building arbitrary deterministic applications above the consensus, and a suite
of tools for deployment and management.  

The Tendermint consensus algorithm is inspired by the PBFT SMR
algorithm~\cite{CL99:osdi} and the DLS algorithm for authenticated faults (the
Algorithm 2 from \cite{DLS88:jacm}). Similar to DLS algorithm, Tendermint
proceeds in
rounds\footnote{Tendermint is not presented in the basic round model of
	\cite{DLS88:jacm}. Furthermore, we use the term round differently than in
	\cite{DLS88:jacm}; in Tendermint a round denotes a sequence of communication
	steps instead of a single communication step in \cite{DLS88:jacm}.}, where each
round has a dedicated proposer (also called coordinator or
leader) and a process proceeds to a new round as part of normal
processing (not only in case the proposer is faulty or suspected as being faulty
by enough processes as in PBFT).  
The communication pattern of each round is very similar to the "normal" case
of PBFT. Therefore, in preferable conditions (correct proposer, timely and
reliable communication between correct processes), Tendermint decides in three
communication steps (the same as PBFT). 

The major novelty and contribution of the Tendermint consensus algorithm is a
new termination mechanism. As explained in \cite{MHS09:opodis, RMS10:dsn}, the
existing BFT consensus (and SMR) algorithms for the partially synchronous
system model (for example PBFT~\cite{CL99:osdi}, \cite{DLS88:jacm},
\cite{MA06:tdsc}) typically relies on the communication pattern illustrated in
Figure~\ref{ch3:fig:coordinator-change} for termination. The
Figure~\ref{ch3:fig:coordinator-change} illustrates messages exchanged during
the proposer change when processes start a new round\footnote{There is no
	consistent terminology in the distributed computing terminology on naming
	sequence of communication steps that corresponds to a logical unit. It is
	sometimes called a round, phase or a view.}. It guarantees that eventually (ie.
after some Global Stabilization Time, GST), there exists a round with a correct
proposer that will bring the system into a univalent configuration.
Intuitively, in a round in which the proposed value is accepted
by all correct processes, and communication between correct processes is
timely and reliable, all correct processes decide.

\begin{figure}[tbh!]
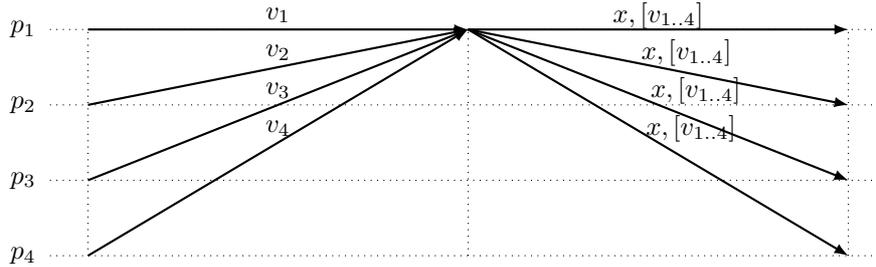
 \def\rdstretch{5} \def\ystretch{3} \centering
	\begin{rounddiag}{4}{2} \round{1}{~} \rdmessage{1}{1}{$v_1$}
		\rdmessage{2}{1}{$v_2$} \rdmessage{3}{1}{$v_3$} \rdmessage{4}{1}{$v_4$}
		\round{2}{~} \rdmessage{1}{1}{$x, [v_{1..4}]$}
		\rdmessage{1}{2}{$~~~~~~x, [v_{1..4}]$} \rdmessage{1}{3}{$~~~~~~~~x,
			[v_{1..4}]$} \rdmessage{1}{4}{$~~~~~~~x, [v_{1..4}]$} \end{rounddiag}
	\vspace{-5mm} \caption{\boldmath Proposer (coordinator) change: $p_1$ is the
		new proposer.} \label{ch3:fig:coordinator-change} \end{figure}  

To ensure that a proposed value is accepted by all correct
processes\footnote{The proposed value is not blindly accepted by correct
	processes in BFT algorithms. A correct process always verifies if the proposed
	value is safe to be accepted so that safety properties of consensus are not
	violated.}
a proposer will 1) build the global state by receiving messages from other
processes, 2) select the safe value to propose and 3) send the selected value
together with the signed messages
received in the first step to support it. The
value $v_i$ that a correct process sends to the next proposer normally
corresponds to a value the process considers as acceptable for a decision: 

\begin{itemize} \item in PBFT~\cite{CL99:osdi} and DLS~\cite{DLS88:jacm} it is
	not the value itself but a set of $2f+1$ signed messages with the same
	value id, \item in Fast Byzantine Paxos~\cite{MA06:tdsc} the value
	itself is being sent.  \end{itemize}

In both cases, using this mechanism in our system model (ie. high
number of nodes over gossip based network) would have high communication
complexity that increases with the number of processes: in the first case as
the message sent depends on the total number of processes, and in the second
case as the value (block of transactions) is sent by each process. The set of
messages received in the first step are normally piggybacked on the proposal
message (in the Figure~\ref{ch3:fig:coordinator-change} denoted with
$[v_{1..4}]$) to justify the choice of the selected value $x$. Note that
sending this message also does not scale with the number of processes in the
system.   

We designed a novel termination mechanism for Tendermint that better suits the
system model we consider. It does not require additional communication (neither
sending new messages nor piggybacking information on the existing messages) and
it is fully based on the communication pattern that is very similar to the
normal case in PBFT \cite{CL99:osdi}. Therefore, there is only a single mode of
execution in Tendermint, i.e., there is no separation between the normal and
the recovery mode, which is the case in other PBFT-like protocols (e.g.,
\cite{CL99:osdi}, \cite{Ver09:spinning} or \cite{Cle09:aardvark}). We believe
this makes Tendermint simpler to understand and implement correctly. 

Note that the orthogonal approach for reducing message complexity in order to
improve
scalability and decentralization (number of processes) of BFT consensus
algorithms is using advanced cryptography (for example Boneh-Lynn-Shacham (BLS)
signatures \cite{BLS2001:crypto}) as done for example in SBFT
\cite{Gue2018:sbft}.  

The remainder of the paper is as follows: Section~\ref{sec:definitions} defines
the system model and gives the problem definitions. Tendermint
consensus algorithm is presented in Section~\ref{sec:tendermint} and the
proofs are given in Section~\ref{sec:proof}. We conclude in
Section~\ref{sec:conclusion}.

\section{Definitions} \label{sec:definitions}

\subsection{Model}

We consider a system of processes that communicate by exchanging messages.
Processes can be correct or faulty, where a faulty process can behave in an
arbitrary way, i.e., we consider Byzantine faults. We assume that each process
has some amount of voting power (voting power of a process can be $0$).
Processes in our model are not part of a single administrative domain;
therefore we cannot enforce a direct network connectivity between all
processes. Instead, we assume that each process is connected to a subset of
processes called peers, such that there is an indirect communication channel
between all correct processes. Communication between processes is established
using a gossip protocol \cite{Dem1987:gossip}.

Formally, we model the network communication using a variant of the \emph{partially
synchronous system model}~\cite{DLS88:jacm}: in all executions of the system
there is a bound $\Delta$ and an instant GST (Global Stabilization Time) such
that all communication among correct processes after GST is reliable and
$\Delta$-timely, i.e., if a correct process $p$ sends message $m$ at time $t
\ge GST$ to a correct process $q$, then $q$ will receive $m$ before $t +
\Delta$\footnote{Note that as we do not assume direct communication channels
    among all correct processes, this implies that before the message $m$
    reaches $q$, it might pass through a number of correct processes that will
forward the message $m$ using gossip protocol towards $q$.}. 
In addition to the standard \emph{partially
	synchronous system model}~\cite{DLS88:jacm}, we assume an auxiliary property 
that captures gossip-based nature of communication\footnote{The details of the Tendermint gossip protocol will be discussed in a separate
	technical report. }:

\begin{itemize} \item \emph{Gossip communication:} If a correct process $p$
	sends some message $m$ at time $t$, all correct processes will receive
	$m$ before $max\{t, GST\} + \Delta$. Furthermore, if a correct process $p$
	receives some message $m$ at time $t$, all correct processes will receive
	$m$ before $max\{t, GST\} + \Delta$.    \end{itemize}

The bound $\Delta$ and GST are system
parameters whose values are not required to be known for the safety of our
algorithm. Termination of the algorithm is guaranteed within a bounded duration
after GST.  In practice, the algorithm will work correctly in the slightly
weaker variant of the model where the system alternates between (long enough)
good periods (corresponds to the \emph{after} GST period where system is
reliable and $\Delta$-timely) and bad periods (corresponds to the period
\emph{before} GST during which the system is asynchronous and messages can be
lost), but consideration of the GST model simplifies the discussion.  

We assume that process steps (which might include sending and receiving
messages) take zero time.  Processes are equipped with clocks so they can
measure local timeouts.  
Spoofing/impersonation attacks are assumed to be impossible at all times due to
the use of public-key cryptography, i.e., we assume that all protocol messages contains a digital signature.
Therefore, when a correct
process $q$ receives a signed message $m$ from its peer, the process $q$ can
verify who was the original sender of the message $m$ and if the message signature is valid.
We do not explicitly state a signature verification step in the pseudo-code of the algorithm to improve readability;
we assume that only messages with the valid signature are considered at that level (and messages with invalid signatures
are dropped).

%Messages that are being gossiped are created by the consensus layer. We can
    %think about consensus protocol as a content creator, which %defines what
    %messages should be disseminated using the gossip protocol. A correct
    %process creates the message for dissemination either i) %explicitly, by
    %invoking \emph{send} function as part of the consensus protocol or ii)
    %implicitly, by receiving a message from some other %process. Note that in
    %the case ii) gossiping of messages is implicit, i.e., it happens without
    %explicit send clause in the consensus algorithm %whenever a correct
    %process receives some messages in the consensus algorithm\footnote{If a
    %message is received by a correct process at %the consensus level then it
    %is considered valid from the protocol point of view, i.e., it has a
    %correct signature, a proper message structure %and a valid height and
    %round number.}. 

%\item Processes keep resending messages (in case of failures or message loss)
    %until all its peers get them. This ensures that every message %sent or
    %received by a correct process is eventually received by all correct
    %processes. 

\subsection{State Machine Replication}

State machine replication (SMR) is a general approach for replicating services
modeled as a deterministic state machine~\cite{Lam78:cacm,Sch90:survey}.  The
key idea of this approach is to guarantee that all replicas start in the same
state and then apply requests from clients in the same order, thereby
guaranteeing that the replicas' states will not diverge.  Following
Schneider~\cite{Sch90:survey}, we note that the following is key for
implementing a replicated state machine tolerant to (Byzantine) faults:

\begin{itemize} \item \emph{Replica Coordination.} All [non-faulty] replicas
    receive and process the same sequence of requests.  \end{itemize}

Moreover, as Schneider also notes, this property can be decomposed into two
parts, \emph{Agreement} and \emph{Order}: Agreement requires all (non-faulty)
replicas to receive all requests, and Order requires that the order of received
requests is the same at all replicas.

There is an additional requirement that needs to be ensured by Byzantine
tolerant state machine replication: only requests (called transactions in the
Tendermint terminology) proposed by clients are executed. In Tendermint,
transaction verification is the responsibility of the service that is being
replicated; upon receiving a transaction from the client, the Tendermint
process will ask the service if the request is valid, and only valid requests
will be processed. 

 \subsection{Consensus} \label{sec:consensus}

Tendermint solves state machine replication by sequentially executing consensus
instances to agree on each block of transactions that are
then executed by the service being replicated. We consider a variant of the
Byzantine consensus problem called Validity Predicate-based Byzantine consensus
that is motivated by blockchain systems~\cite{GLR17:red-belly-bc}. The problem
is defined by an agreement, a termination, and a validity property.

 \begin{itemize} \item \emph{Agreement:} No two correct processes decide on
         different values.  \item \emph{Termination:} All correct processes
         eventually decide on a value.  \item \emph{Validity:} A decided value
             is valid, i.e., it satisfies the predefined predicate denoted
             \emph{valid()}.  \end{itemize}

 This variant of the Byzantine consensus problem has an application-specific
 \emph{valid()} predicate to indicate whether a value is valid. In the context
 of blockchain systems, for example, a value is not valid if it does not
 contain an appropriate hash of the last value (block) added to the blockchain.

\section{Tendermint consensus algorithm} \label{sec:tendermint}

\newcommand\Disseminate{\textbf{Disseminate}}

\newcommand\Proposal{\mathsf{PROPOSAL}}
\newcommand\ProposalPart{\mathsf{PROPOSAL\mbox{-}PART}}
\newcommand\PrePrepare{\mathsf{INIT}} \newcommand\Prevote{\mathsf{PREVOTE}}
\newcommand\Precommit{\mathsf{PRECOMMIT}}
\newcommand\Decision{\mathsf{DECISION}}

\newcommand\ViewChange{\mathsf{VC}}
\newcommand\ViewChangeAck{\mathsf{VC\mbox{-}ACK}}
\newcommand\NewPrePrepare{\mathsf{VC\mbox{-}INIT}}
\newcommand\coord{\mathsf{proposer}}

\newcommand\newHeight{newHeight} \newcommand\newRound{newRound}
\newcommand\nil{nil} \newcommand\id{id} \newcommand{\propose}{propose}
\newcommand\prevote{prevote} \newcommand\prevoteWait{prevoteWait}
\newcommand\precommit{precommit} \newcommand\precommitWait{precommitWait}
\newcommand\commit{commit}

\newcommand\timeoutPropose{timeoutPropose}
\newcommand\timeoutPrevote{timeoutPrevote}
\newcommand\timeoutPrecommit{timeoutPrecommit}
\newcommand\proofOfLocking{proof\mbox{-}of\mbox{-}locking}

\begin{algorithm}[htb!] \def\baselinestretch{1} \scriptsize\raggedright
	\begin{algorithmic}[1] 
		\SHORTSPACE 
		\INIT{} 
		\STATE $h_p := 0$  
		\COMMENT{current height, or consensus instance we are currently executing} 
		\STATE $round_p := 0$   \COMMENT{current round number}
		\STATE $step_p  \in \set{\propose, \prevote, \precommit}$
		\STATE $decision_p[] := nil$ 
		\STATE $lockedValue_p := nil$ 
		\STATE $lockedRound_p := -1$ 
		\STATE $validValue_p := nil$ 
		\STATE $validRound_p := -1$ 
		\ENDINIT 
		\SHORTSPACE 
		\STATE \textbf{upon} start \textbf{do} $StartRound(0)$ 
		\SHORTSPACE 
		\FUNCTION{$StartRound(round)$} \label{line:tab:startRound} 
		\STATE $round_p \assign round$ 
		\STATE $step_p \assign \propose$ 
		\IF{$\coord(h_p, round_p) = p$}
		\IF{$validValue_p \neq \nil$} \label{line:tab:isThereLockedValue}
		\STATE $proposal \assign validValue_p$ \ELSE \STATE $proposal \assign
		getValue()$ 
		\label{line:tab:getValidValue} 
		\ENDIF 	  
		\STATE \Broadcast\ $\li{\Proposal,h_p, round_p, proposal, validRound_p}$
		\label{line:tab:send-proposal} 
		\ELSE 
		\STATE \textbf{schedule} $OnTimeoutPropose(h_p,
		round_p)$ to be executed \textbf{after} $\timeoutPropose(round_p)$ 
		\ENDIF
		\ENDFUNCTION
		
		\SPACE 
		\UPON{$\li{\Proposal,h_p,round_p, v, -1}$ \From\ $\coord(h_p,round_p)$
			\With\ $step_p = \propose$} \label{line:tab:recvProposal}
			\IF{$valid(v) \wedge (lockedRound_p = -1  \vee lockedValue_p = v$)}
			\label{line:tab:accept-proposal-2} 
				\STATE \Broadcast \ $\li{\Prevote,h_p,round_p,id(v)}$  
				\label{line:tab:prevote-proposal}	
			\ELSE
			\label{line:tab:acceptProposal1}		
				\STATE \Broadcast \ $\li{\Prevote,h_p,round_p,\nil}$  
				\label{line:tab:prevote-nil}	
			\ENDIF
				\STATE $step_p \assign \prevote$ \label{line:tab:setStateToPrevote1} 
		\ENDUPON
		
		\SPACE 
		\UPON{$\li{\Proposal,h_p,round_p, v, vr}$ \From\ $\coord(h_p,round_p)$
			\textbf{AND} $2f+1$ $\li{\Prevote,h_p, vr,id(v)}$  \With\ $step_p = \propose \wedge (vr \ge 0 \wedge vr < round_p)$}
		\label{line:tab:acceptProposal} 
		\IF{$valid(v) \wedge (lockedRound_p \le vr
			\vee lockedValue_p = v)$} \label{line:tab:cond-prevote-higher-proposal}	
			\STATE \Broadcast \ $\li{\Prevote,h_p,round_p,id(v)}$
			\label{line:tab:prevote-higher-proposal}		 
		\ELSE
			\label{line:tab:acceptProposal2}		
			\STATE \Broadcast \ $\li{\Prevote,h_p,round_p,\nil}$  
			\label{line:tab:prevote-nil2}	
		\ENDIF
		\STATE $step_p \assign \prevote$ \label{line:tab:setStateToPrevote3} 	 
		\ENDUPON
		
		\SPACE 
		\UPON{$2f+1$ $\li{\Prevote,h_p, round_p,*}$ \With\ $step_p = \prevote$ for the first time}
		\label{line:tab:recvAny2/3Prevote} 
		\STATE \textbf{schedule} $OnTimeoutPrevote(h_p, round_p)$ to be executed \textbf{after}  $\timeoutPrevote(round_p)$ \label{line:tab:timeoutPrevote} 
		\ENDUPON
		
		\SPACE 
		\UPON{$\li{\Proposal,h_p,round_p, v, *}$ \From\ $\coord(h_p,round_p)$
			\textbf{AND} $2f+1$ $\li{\Prevote,h_p, round_p,id(v)}$  \With\ $valid(v) \wedge step_p \ge \prevote$ for the first time}
		\label{line:tab:recvPrevote} 
		\IF{$step_p = \prevote$}	
			\STATE $lockedValue_p \assign v$                \label{line:tab:setLockedValue} 
			\STATE $lockedRound_p \assign round_p$   \label{line:tab:setLockedRound} 
			\STATE \Broadcast \ $\li{\Precommit,h_p,round_p,id(v))}$  
			\label{line:tab:precommit-v}	
			\STATE $step_p \assign \precommit$ \label{line:tab:setStateToCommit} 
		\ENDIF 
		\STATE $validValue_p \assign v$ \label{line:tab:setValidRound} 
		\STATE $validRound_p \assign round_p$ \label{line:tab:setValidValue} 
		\ENDUPON
		
		\SHORTSPACE 
		\UPON{$2f+1$ $\li{\Prevote,h_p,round_p, \nil}$ 
			\With\ $step_p = \prevote$} 
			\STATE \Broadcast \ $\li{\Precommit,h_p,round_p, \nil}$
			\label{line:tab:precommit-v-1} 
			\STATE $step_p \assign \precommit$ 
		\ENDUPON
		
		\SPACE 
		\UPON{$2f+1$ $\li{\Precommit,h_p,round_p,*}$ for the first time}
		\label{line:tab:startTimeoutPrecommit} 
			\STATE \textbf{schedule} $OnTimeoutPrecommit(h_p, round_p)$ to be executed \textbf{after} $\timeoutPrecommit(round_p)$ 
			 
		\ENDUPON 
		
		\SPACE 
		\UPON{$\li{\Proposal,h_p,r, v, *}$ \From\ $\coord(h_p,r)$ \textbf{AND}
			$2f+1$ $\li{\Precommit,h_p,r,id(v)}$ \With\ $decision_p[h_p] = \nil$}
		\label{line:tab:onDecideRule} 
			\IF{$valid(v)$} \label{line:tab:validDecisionValue} 
				\STATE $decision_p[h_p] = v$ \label{line:tab:decide} 
				\STATE$h_p \assign h_p + 1$ \label{line:tab:increaseHeight} 
				\STATE reset $lockedRound_p$, $lockedValue_p$, $validRound_p$ and $validValue_p$ to initial values 
				and empty message log 
				\STATE $StartRound(0)$   	
			\ENDIF 
		\ENDUPON
		
		\SHORTSPACE 
		\UPON{$f+1$ $\li{*,h_p,round, *, *}$ \textbf{with} $round > round_p$} 
		\label{line:tab:skipRounds} 
			\STATE $StartRound(round)$ \label{line:tab:nextRound2} 
		\ENDUPON
		
		\SHORTSPACE 
		\FUNCTION{$OnTimeoutPropose(height,round)$} \label{line:tab:onTimeoutPropose} 
		\IF{$height = h_p \wedge round = round_p \wedge step_p = \propose$} 
			\STATE \Broadcast \ $\li{\Prevote,h_p,round_p, \nil}$ 
		 	\label{line:tab:prevote-nil-on-timeout}	
		 	\STATE $step_p \assign \prevote$ 
		 \ENDIF	
		 \ENDFUNCTION
		
		\SHORTSPACE 
		\FUNCTION{$OnTimeoutPrevote(height,round)$} \label{line:tab:onTimeoutPrevote} 
		\IF{$height = h_p \wedge round = round_p \wedge step_p = \prevote$} 
			\STATE \Broadcast \ $\li{\Precommit,h_p,round_p,\nil}$   
			\label{line:tab:precommit-nil-onTimeout}
			\STATE $step_p \assign \precommit$ 
		\ENDIF	
		\ENDFUNCTION
		
		\SHORTSPACE 
		\FUNCTION{$OnTimeoutPrecommit(height,round)$} \label{line:tab:onTimeoutPrecommit} 
		\IF{$height = h_p \wedge round = round_p$}
			\STATE $StartRound(round_p + 1)$ \label{line:tab:nextRound} 
		\ENDIF
		\ENDFUNCTION	
	\end{algorithmic} \caption{Tendermint consensus algorithm}
	\label{alg:tendermint} 
\end{algorithm}

In this section we present the Tendermint Byzantine fault-tolerant consensus
algorithm. The algorithm is specified by the pseudo-code listing in
Algorithm~\ref{alg:tendermint}. We present the algorithm as a set of \emph{upon
rules} that are executed atomically\footnote{In case several rules are active
at the same time, the first rule to be executed is picked randomly. The
correctness of the algorithm does not depend on the order in which rules are
executed.}. We assume that processes exchange protocol messages using a gossip
protocol and that both received and sent messages are stored in a local message
log for every process. An upon rule is triggered once the message log contains
messages such that the corresponding condition evaluates to $\tt{true}$. The
condition that assumes reception of $X$ messages of a particular type and
content denotes reception of messages whose senders have aggregate voting power at
least equal to $X$. For example, the condition $2f+1$ $\li{\Precommit,h_p,r,id(v)}$,  
evaluates to true upon reception of $\Precommit$ messages for height $h_p$, 
a round $r$ and with value equal to $id(v)$ whose senders have aggregate voting 
power at least equal to $2f+1$. Some of the rules ends with "for the first time" constraint 
to denote that it is triggered only the first time a corresponding condition evaluates 
to $\tt{true}$. This is because those rules do not always change the state of algorithm 
variables so without this constraint, the algorithm could keep 
executing those rules forever. The variables with index $p$ are process local state
variables, while variables without index $p$ are value placeholders. The sign
$*$ denotes any value.    

We denote with $n$ the total voting power of processes in the system, and we
assume that the total voting power of faulty processes in the system is bounded
with a system parameter $f$.  The algorithm assumes that $n > 3f$, i.e., it
requires that the total voting power of faulty processes is smaller than one
third of the total voting power. For simplicity we present the algorithm for
the case $n = 3f + 1$.

The algorithm proceeds in rounds, where each round has a dedicated
\emph{proposer}. The mapping of rounds to proposers is known to all processes
and is given as a function $\coord(h, round)$, returning the proposer for
the round $round$ in the consensus instance $h$. We
assume that the proposer selection function is weighted round-robin, where
processes are rotated proportional to their voting power\footnote{A validator
with more voting power is selected more frequently, proportional to its power.
More precisely, during a sequence of rounds of size $n$, every process is
proposer in a number of rounds equal to its voting power.}. 
The internal protocol state transitions are triggered by message reception and 
by expiration of timeouts. There are three timeouts in Algorithm \ref{alg:tendermint}:
$\timeoutPropose$, $\timeoutPrevote$ and $\timeoutPrecommit$.
The timeouts prevent the algorithm from blocking and
waiting forever for some condition to be true, ensure that processes continuously 
transition between rounds, and guarantee that eventually (after GST) communication 
between correct processes is timely and reliable so they can decide. 
The last role is achieved by increasing the timeouts with every new round $r$, 
i.e, $timeoutX(r) = initTimeoutX + r*timeoutDelta$; 
they are reset for every new height (consensus
instance). 

Processes exchange the following messages in Tendermint: $\Proposal$,
$\Prevote$ and $\Precommit$. The $\Proposal$ message is used by the proposer of
the current round to suggest a potential decision value, while $\Prevote$ and
$\Precommit$ are votes for a proposed value. According to the classification of
consensus algorithms from \cite{RMS10:dsn}, Tendermint, like PBFT
\cite{CL02:tcs} and DLS \cite{DLS88:jacm}, belongs to class 3, so it requires
two voting steps (three communication exchanges in total) to decide a value.
The Tendermint consensus algorithm is designed for the blockchain context where
the value to decide is a block of transactions (ie. it is potentially quite
large, consisting of many transactions). Therefore, in the Algorithm
\ref{alg:tendermint} (similar as in \cite{CL02:tcs}) we are explicit about
sending a value (block of transactions) and a small, constant size value id (a
unique value identifier, normally a hash of the value, i.e., if $\id(v) =
\id(v')$, then $v=v'$). The $\Proposal$ message is the only one carrying the
value; $\Prevote$ and $\Precommit$ messages carry the value id.  A correct
process decides on a value $v$ in Tendermint upon receiving the $\Proposal$ for
$v$ and $2f+1$ voting-power equivalent $\Precommit$ messages for $\id(v)$ in
some round $r$. In order to send $\Precommit$ message for $v$ in a round $r$, a
correct process waits to receive the $\Proposal$ and $2f+1$ of the
corresponding $\Prevote$ messages in the round $r$. Otherwise, 
it sends $\Precommit$ message with a special $\nil$ value.  
This ensures that correct processes can $\Precommit$ only a 
single value (or $\nil$) in a round.  As
proposers may be faulty, the proposed value is treated by correct processes as
a suggestion (it is not blindly accepted), and a correct process tells others
if it accepted the $\Proposal$ for value $v$ by sending $\Prevote$ message for
$\id(v)$; otherwise it sends $\Prevote$ message with the special $\nil$ value. 

Every process maintains the following variables in the Algorithm
\ref{alg:tendermint}: $step$, $lockedValue$, $lockedRound$, $validValue$ and
$validRound$. The $step$ denotes the current state of the internal Tendermint
state machine, i.e., it reflects the stage of the algorithm execution in the
current round. The $lockedValue$ stores the most recent value (with respect to
a round number) for which a $\Precommit$ message has been sent. The
$lockedRound$ is the last round in which the process sent a $\Precommit$
message that is not $\nil$. We also say that a correct process locks a value
$v$ in a round $r$ by setting $lockedValue = v$ and $lockedRound = r$ before
sending $\Precommit$ message for $\id(v)$. As a correct process can decide a
value $v$ only if $2f+1$ $\Precommit$ messages for $\id(v)$ are received, this
implies that a possible decision value is a value that is locked by at least
$f+1$ voting power equivalent of correct processes. Therefore, any value $v$
for which $\Proposal$ and $2f+1$ of the corresponding $\Prevote$ messages are
received in some round $r$ is a \emph{possible decision} value. The role of the
$validValue$ variable is to store the most recent possible decision value; the
$validRound$ is the last round in which $validValue$ is updated. Apart from
those variables, a process also stores the current consensus instance ($h_p$,
called \emph{height} in Tendermint), and the current round number ($round_p$)
and attaches them to every message. Finally, a process also stores an array of
decisions, $decision_p$ (Tendermint assumes a sequence of consensus instances,
one for each height).

Every round starts by a proposer suggesting a value with the $\Proposal$
message (see line \ref{line:tab:send-proposal}). In the initial round of each
height, the proposer is free to chose the value to suggest. In the
Algorithm~\ref{alg:tendermint}, a correct process obtains a value to propose
using an external function    $getValue()$ that returns a valid value to
propose. In the following rounds, a correct proposer will suggest a new value
only if $validValue = \nil$; otherwise $validValue$ is proposed (see
lines~\ref{line:tab:isThereLockedValue}-\ref{line:tab:getValidValue}). 
In addition to the value proposed, the $\Proposal$ message also
contains the $validRound$ so other processes are informed about the last round
in which the proposer observed $validValue$ as a possible decision value.
Note that if a correct proposer $p$ sends $validValue$ with the $validRound$ in the
$\Proposal$, this implies that the process $p$ received $\Proposal$ and the
corresponding $2f+1$ $\Prevote$ messages for $validValue$ in the round
$validRound$. 
If a correct process sends $\Proposal$ message with $validValue$ ($validRound > -1$)
at time $t > GST$, by the \emph{Gossip communication} property, the
corresponding $\Proposal$ and the $\Prevote$ messages will be received by all
correct processes before time $t+\Delta$. Therefore, all correct processes will
be able to verify the correctness of the suggested value as it is supported by
the $\Proposal$ and the corresponding $2f+1$ voting power equivalent $\Prevote$
messages.   

A correct process $p$ accepts the proposal for a value $v$  (send $\Prevote$
for $id(v)$) if an external \emph{valid} function returns $true$ for the value
$v$, and if $p$ hasn't locked any value ($lockedRound = -1$) or $p$ has locked
the value $v$ ($lockedValue = v$); see the line
\ref{line:tab:accept-proposal-2}.  In case the proposed pair is $(v,vr \ge 0)$ and a
correct process $p$ has locked some value, it will accept
$v$ if it is a more recent possible decision value\footnote{As
explained above, the possible decision value in a round $r$ is the one for
which $\Proposal$ and the corresponding $2f+1$ $\Prevote$ messages are received
for the round $r$.}, $vr > lockedRound_p$,  or if $lockedValue = v$ 
(see line~\ref{line:tab:cond-prevote-higher-proposal}).  Otherwise, a correct
process will reject the proposal by sending $\Prevote$ message with $\nil$
value. A correct process will send $\Prevote$ message with $\nil$ value also in
case $\timeoutPropose$ expired (it is triggered when a correct process starts a
new round) and a process has not sent $\Prevote$ message in the current round
yet (see the line \ref{line:tab:onTimeoutPropose}). 

If a correct process receives $\Proposal$ message for some value $v$ and $2f+1$
$\Prevote$ messages for $\id(v)$, then it sends $\Precommit$ message with
$\id(v)$. Otherwise, it sends $\Precommit$ $\nil$. A correct process will send
$\Precommit$ message with $\nil$ value also in case $\timeoutPrevote$ expired
(it is started when a correct process sent $\Prevote$ message and received any
$2f+1$ $\Prevote$ messages)  and a process has not sent $\Precommit$ message in
the current round yet (see the line \ref{line:tab:onTimeoutPrecommit}).  A
correct process decides on some value $v$ if it receives in some round $r$
$\Proposal$ message for $v$ and $2f+1$ $\Precommit$ messages with $\id(v)$ (see
the line \ref{line:tab:decide}).  To prevent the algorithm from blocking and
waiting forever for this condition to be true, the Algorithm
\ref{alg:tendermint} relies on $\timeoutPrecommit$. It is triggered after a
process receives any set of $2f+1$ $\Precommit$ messages for the current round.
If the $\timeoutPrecommit$ expires and a process has not decided yet, the
process starts the next round (see the line \ref{line:tab:onTimeoutPrecommit}).
When a correct process $p$ decides, it starts the next consensus instance 
(for the next height). The \emph{Gossip communication} property ensures 
that $\Proposal$ and $2f+1$ $\Prevote$ messages that led $p$ to decide 
are eventually received by all correct processes, so they will also decide. 

\subsection{Termination mechanism}

Tendermint ensures termination by a novel mechanism that benefits from the
gossip based nature of communication (see \emph{Gossip communication}
property).  It requires managing two additional variables, $validValue$ and
$validRound$ that are then used by the proposer during the propose step as
explained above.   The $validValue$ and $validRound$ are updated to $v$ and $r$
by a correct process in a round $r$ when the process receives valid $\Proposal$
message for the value $v$ and the corresponding $2f+1$ $\Prevote$ messages for
$id(v)$ in the round $r$ (see the rule at line~\ref{line:tab:recvPrevote}).

We now give briefly the intuition how managing and proposing $validValue$
and $validRound$ ensures termination. Formal treatment is left for
Section~\ref{sec:proof}.  

The first thing to note is that during good period, because of the
\emph{Gossip communication} property, if a correct process $p$ locks a value
$v$ in some round $r$, all correct processes will update $validValue$ to $v$
and $validRound$ to $r$ before the end of the round $r$ (we prove this formally
in the Section~\ref{sec:proof}). The intuition is that messages that led to $p$
locking a value $v$ in the round $r$ will be gossiped to all correct processes
before the end of the round $r$, so it will update $validValue$ and
$validRound$ (the line~\ref{line:tab:recvPrevote}). Therefore, if a correct
process locks some value during good period, $validValue$ and $validRound$ are
updated by all correct processes so that the value proposed in the following
rounds will be acceptable by all correct processes. Note 
that it could happen that during good period, no correct process locks a value,
but some correct process $q$ updates $validValue$ and $validRound$ during some
round. As no correct process locks a value in this case, $validValue_q$ and
$validRound_q$ will also be acceptable by all correct processes as
$validRound_q > lockedRound_c$ for every correct process $c$ and as the
\emph{Gossip communication} property ensures that the corresponding $\Prevote$
messages that $q$ received in the round $validRound_q$ are received by all
correct processes $\Delta$ time later. 

Finally, it could happen that after GST, there is a long sequence of rounds in which 
no correct process neither locks a value nor update $validValue$ and $validRound$. 
In this case, during this sequence of rounds, the proposed value suggested by correct
processes was not accepted by all correct processes. Note that this sequence of rounds 
is always finite as at the beginning of every
round there is at least a single correct process $c$ such that $validValue_c$
and $validRound_c$ are acceptable by every correct process. This is true as
there exists a correct process $c$ such that for every other correct process
$p$, $validRound_c > lockedRound_p$ or $validValue_c = lockedValue_p$. This is
true as $c$ is the process that has locked a value in the most recent round
among all correct processes (or no correct process locked any value). Therefore,
eventually $c$ will be the proper in some round and the proposed value will be accepted
by all correct processes, terminating therefore this sequence of 
rounds. 

Therefore, updating $validValue$ and $validRound$ variables, and the
\emph{Gossip communication} property, together ensures that eventually, during
the good period, there exists a round with a correct proposer whose proposed
value will be accepted by all correct processes, and all correct processes will
terminate in that round. Note that this mechanism, contrary to the common
termination mechanism illustrated in the
Figure~\ref{ch3:fig:coordinator-change}, does not require exchanging any
additional information in addition to messages already sent as part of what is
normally being called "normal" case.

\section{Proof of Tendermint consensus algorithm} \label{sec:proof}

\begin{lemma} \label{lemma:majority-intersection} For all $f\geq 0$, any two
sets of processes with voting power at least equal to $2f+1$ have at least one
correct process in common.  \end{lemma}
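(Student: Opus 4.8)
The plan is to run the standard quorum-intersection counting argument, adapted to the voting-power setting in which the relevant parameters are fixed by the model assumptions: the total voting power is $n = 3f+1$, and the aggregate voting power of all faulty processes is at most $f$. Throughout, write $w(S)$ for the aggregate voting power of a set $S$ of processes, and recall that $w$ is non-negative and additive over disjoint sets, so in particular $w(A \cup B) \le n$ for any process sets $A, B$.

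First I would fix two sets $A$ and $B$ with $w(A) \ge 2f+1$ and $w(B) \ge 2f+1$, and bound the voting power of their intersection from below by inclusion–exclusion:
\[
  w(A \cap B) = w(A) + w(B) - w(A \cup B) \ge (2f+1) + (2f+1) - (3f+1) = f+1 .
\]
So any two such sets overlap on processes whose combined voting power is at least $f+1$.

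Next I would compare this with the failure bound. The faulty processes lying inside $A \cap B$ form a subset of all faulty processes, hence have aggregate voting power at most $f$, which is strictly smaller than $w(A \cap B) \ge f+1$. Therefore $A \cap B$ cannot consist entirely of faulty processes, so it contains at least one correct process — exactly the claim.

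I do not anticipate a genuine obstacle here; the argument is pure arithmetic once the inclusion–exclusion step is justified. The only points worth a word of care are that $w$ is a non-negative additive set function (so that $w(A\cup B)\le n$ and the displayed inequality are legitimate), and the boundary case $f = 0$, where $n = 1$ forces any qualifying set to be the whole system, making the intersection trivially all-correct.
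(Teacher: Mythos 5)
Your proof is correct and follows essentially the same route as the paper's: both compute that the intersection of two sets of voting power $2f+1$ must carry at least $(2f+1)+(2f+1)-(3f+1)=f+1$ voting power, which exceeds the total faulty voting power $f$, so the intersection contains a correct process. Your write-up merely makes the inclusion–exclusion step and the additivity of the voting-power function explicit, which the paper leaves implicit.
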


\begin{proof} As the total voting power is equal to $n=3f+1$, we have $2(2f+1)
    = n+f+1$.  This means that the intersection of two sets with the voting
    power equal to $2f+1$ contains at least $f+1$ voting power in common, \ie,
    at least one correct process (as the total voting power of faulty processes
    is $f$). The result follows directly from this.  \end{proof}

\begin{lemma} \label{lemma:locked-decision_value-prevote-v} If $f+1$ correct
processes lock value $v$ in round $r_0$ ($lockedValue = v$ and $lockedRound =
r_0$), then in all rounds $r > r_0$, they send $\Prevote$ for $id(v)$ or
$\nil$.  \end{lemma}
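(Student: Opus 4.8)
The plan is to argue by strong induction on the round number. Fix the consensus instance (height) in which the locking happens and let $C$ be the set of correct processes that lock $v$ in round $r_0$; by hypothesis the aggregate voting power of $C$ is at least $f+1$, and since a process resets $lockedValue$/$lockedRound$ and moves to the next height when it decides, every round below refers to this one height. First I would record two structural facts obtained by inspecting Algorithm~\ref{alg:tendermint}: (i) in a given round a correct process broadcasts at most one $\Prevote$, and only via the rule at line~\ref{line:tab:recvProposal}, the rule at line~\ref{line:tab:acceptProposal}, or $OnTimeoutPropose$ (line~\ref{line:tab:onTimeoutPropose}) --- each is guarded by $step_p=\propose$ and sets $step_p\assign\prevote$, and within a round $step_p$ only advances $\propose\to\prevote\to\precommit$; (ii) within a height $round_p$ is non-decreasing and $lockedRound_p$ is non-decreasing (it is assigned only the current $round_p$, at line~\ref{line:tab:setLockedRound}), and a process re-locks at most once per round.

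The statement to prove by strong induction on $r>r_0$ is $A(r)$: every $p\in C$ that broadcasts a $\Prevote$ in round $r$ does so for $\id(v)$ or for $\nil$. The first move in the inductive step is an auxiliary claim: a process $p\in C$ on the verge of broadcasting a $\Prevote$ in round $r$ has $lockedRound_p\ge r_0$ (hence $lockedRound_p\neq -1$) and $lockedValue_p=v$. The bound on $lockedRound_p$ is immediate from~(ii). For $lockedValue_p=v$: if $lockedRound_p=r_0$ then $p$ has not re-locked since round $r_0$, so $lockedValue_p$ is still $v$; otherwise $lockedRound_p=\rho$ with $r_0<\rho<r$ (strictly below $r$ because $step_p=\propose$ in round $r$ means $p$ has not yet run the rule at line~\ref{line:tab:recvPrevote} in round $r$), and that re-lock required $2f+1$ voting power of $\li{\Prevote,h_p,\rho,\id(w)}$ messages for the new locked value $w$; at least $f+1$ of that voting power is correct, and by the induction hypothesis $A(\rho)$ none of it lies in $C$ unless $w=v$, so a counting argument (total correct voting power is $n-f=2f+1$, of which $C$ already accounts for at least $f+1$) forces $w=v$.

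Given the auxiliary claim, I would finish $A(r)$ by going through the three $\Prevote$-emitting rules for a process $p\in C$ at round $r$. $OnTimeoutPropose$ only ever emits $\Prevote$ for $\nil$. For line~\ref{line:tab:recvProposal} with proposed value $v''$, the non-$\nil$ branch needs $lockedRound_p=-1$ or $lockedValue_p=v''$, which given the auxiliary claim forces $v''=v$. For line~\ref{line:tab:acceptProposal} with proposed pair $(v'',vr)$, $0\le vr<round_p=r$, the non-$\nil$ branch needs $lockedRound_p\le vr$ or $lockedValue_p=v''$; the latter again forces $v''=v$, so suppose $v''\neq v$ and $lockedRound_p\le vr$, which together with $lockedRound_p\ge r_0$ gives $vr\ge r_0$. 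The rule also requires $2f+1$ voting power of $\li{\Prevote,h_p,vr,\id(v'')}$ messages. If $vr=r_0$, then since every $p\in C$ locked $v$ in round $r_0$ there is likewise $2f+1$ voting power of $\li{\Prevote,h_p,r_0,\id(v)}$ messages, and Lemma~\ref{lemma:majority-intersection} yields a correct process that broadcast $\Prevote$ both for $\id(v)$ and for $\id(v'')$ in round $r_0$, contradicting~(i). If $r_0<vr<r$, then by $A(vr)$ no member of $C$ broadcast $\Prevote$ for $\id(v'')$ in round $vr$, so the correct voting power behind those messages is at most $(2f+1)-(f+1)=f$, which together with the at most $f$ faulty voting power cannot reach $2f+1$ --- a contradiction. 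Hence line~\ref{line:tab:acceptProposal} can only fire its non-$\nil$ branch when $v''=v$, completing the induction; the base case $r=r_0+1$ needs nothing extra, since then the only sub-case that would invoke the induction hypothesis ($r_0<vr<r$) is vacuous.

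I expect the main obstacle to be exactly the rule at line~\ref{line:tab:acceptProposal}: because of its $vr$ parameter a correct process may legitimately broadcast $\Prevote$ again for a value different from its current $lockedValue$, so ruling this out for a $C$-process and a value $\neq v$ requires the case split on $vr$ versus $r_0$ above, and that split in turn relies on the bookkeeping --- folded into the same strong induction --- that no $C$-process ever re-locks to anything but $v$ in the rounds under consideration. The other point to handle carefully is the quantitative side: ``$X$ messages'' means aggregate voting power at least $X$, so every counting step (the $f+1$ correct voting power inside any $2f+1$, and disjointness from $C$ pushing the correct total past $2f+1$) should be phrased in voting-power terms, exactly as in Lemma~\ref{lemma:majority-intersection}.
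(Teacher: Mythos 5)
Your proof is correct and follows essentially the same route as the paper's: induction on the round number, using Lemma~\ref{lemma:majority-intersection} together with a voting-power count to show that no $2f+1$ worth of $\Prevote$ messages for a value $v'\neq v$ can accumulate in any round from $r_0$ up to the current one, hence members of $C$ keep $lockedValue=v$ and the guards at lines~\ref{line:tab:accept-proposal-2} and~\ref{line:tab:cond-prevote-higher-proposal} force them to prevote only $\id(v)$ or $\nil$. Your write-up is somewhat more explicit than the paper's (separating the $vr=r_0$ subcase, which needs the intersection argument against the existing $2f+1$ $\Prevote$ power for $\id(v)$, from the $vr>r_0$ subcase, which needs the induction hypothesis, and spelling out why a $C$-process can only ever re-lock on $v$), but the underlying argument is the same.
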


\begin{proof} We prove the result by induction on $r$.

\emph{Base step $r = r_0 + 1:$} Let's denote with $C$ the set of correct
processes with voting power equal to $f+1$.  By the rules at
line~\ref{line:tab:recvProposal} and line~\ref{line:tab:acceptProposal}, the
processes from the set $C$ can't accept $\Proposal$ for any value different
from $v$ in round $r$, and therefore can't send a $\li{\Prevote,height_p,
r,id(v')}$ message, if $v' \neq v$. Therefore, the Lemma holds for the base
step.

\emph{Induction step from $r_1$ to $r_1+1$:} We assume that no process from the
set $C$ has sent $\Prevote$ for values different than $id(v)$ or $\nil$ until
round $r_1 + 1$. We now prove that the Lemma also holds for round $r_1 + 1$. As
processes from the set $C$ send $\Prevote$ for $id(v)$ or $\nil$ in rounds $r_0
\le r \le r_1$, by Lemma~\ref{lemma:majority-intersection} there is no value
$v' \neq v$ for which it is possible to receive $2f+1$ $\Prevote$ messages in
those rounds (i). Therefore, we have for all processes from the set $C$,
$lockedValue = v$ and $lockedRound \ge r_0$.   Let's assume by a contradiction
that a process $q$ from the set $C$ sends $\Prevote$ in round $r_1 + 1$ for
value $id(v')$, where $v' \neq v$. This is possible only by
line~\ref{line:tab:prevote-higher-proposal}.  Note that this implies that $q$
received $2f+1$ $\li{\Prevote,h_q, r,id(v')}$ messages, where $r > r_0$ and $r
< r_1 +1$ (see line~\ref{line:tab:cond-prevote-higher-proposal}). A
contradiction with (i) and Lemma~\ref{lemma:majority-intersection}.
\end{proof}	

\begin{lemma} \label{lemma:agreement} Algorithm~\ref{alg:tendermint} satisfies
Agreement.  \end{lemma}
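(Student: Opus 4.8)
The plan is to show that if a correct process $p$ decides $v$ in some round $r_0$ of a given height and a correct process $p'$ decides $v'$ in some round $r'$ of the same height, then $v = v'$ (the height is fixed throughout, since all state is reset between heights). Without loss of generality assume $r' \ge r_0$, as the situation is symmetric in $(p,v,r_0)$ and $(p',v',r')$. The decision of $p$ at line~\ref{line:tab:decide} means $p$ received, for round $r_0$, $\Precommit$ messages for $id(v)$ whose senders have aggregate voting power at least $2f+1$; since faulty processes hold at most $f$ voting power, a set $C$ of correct processes with aggregate voting power at least $f+1$ sent $\Precommit$ for $id(v)$ in round $r_0$. A correct process emits a non-$\nil$ $\Precommit$ only through the rule at line~\ref{line:tab:recvPrevote}, which also sets $lockedValue := v$ and $lockedRound := r_0$; hence every process in $C$ locks $v$ in round $r_0$, and Lemma~\ref{lemma:locked-decision_value-prevote-v} applies to $C$: in every round $r > r_0$ the processes in $C$ send $\Prevote$ only for $id(v)$ or for $\nil$.

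Then I split on $r'$. If $r' = r_0$, the decision of $p'$ yields a second set of $\Precommit$ messages of aggregate voting power at least $2f+1$, this time for $id(v')$, all in round $r_0$. By Lemma~\ref{lemma:majority-intersection} this set and the $\Precommit$-set for $id(v)$ share a correct process, which would then have sent $\Precommit$ for both $id(v)$ and $id(v')$ in round $r_0$. Since a correct process sends at most one non-$\nil$ $\Precommit$ per round (line~\ref{line:tab:recvPrevote} is guarded to fire at most once per round, and the other $\Precommit$ rules send $\nil$), we get $id(v) = id(v')$, hence $v = v'$.

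If $r' > r_0$, I unpack $p'$'s decision one level further: among the $\Precommit$ messages for $id(v')$ in round $r'$ of aggregate voting power $\ge 2f+1$ there is a correct process $q$, and $q$ can have emitted that $\Precommit$ only via line~\ref{line:tab:recvPrevote}, so there exist $\Prevote$ messages for $id(v')$ in round $r'$ whose senders have aggregate voting power at least $2f+1$. The senders of these $\Prevote$ messages, together with the set $C$, have aggregate voting power at least $(2f+1) + (f+1) = n+1 > n$, so the two sets intersect; since $C$ consists only of correct processes, the intersection contains a correct process $c \in C$ that sent $\Prevote$ for $id(v')$ in round $r' > r_0$. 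But by the first paragraph $c$ may send $\Prevote$ in round $r'$ only for $id(v)$ or $\nil$, so $id(v') = id(v)$, and again $v = v'$.

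The routine parts are the voting-power bookkeeping and checking that the unique way a correct process emits a non-$\nil$ $\Precommit$ (resp. reaches line~\ref{line:tab:recvPrevote}) is the single cited rule. The one genuinely delicate point is the intersection argument in the $r' > r_0$ case: Lemma~\ref{lemma:majority-intersection} is stated for two sets each of voting power $2f+1$, whereas here I intersect a set of voting power $\ge 2f+1$ with $C$, which has voting power only $f+1$, so I will either invoke the elementary bound $(2f+1)+(f+1) > n$ directly or record this as a small generalization of Lemma~\ref{lemma:majority-intersection}. I also need to be careful that "decides in round $r_0$'' refers to the round tag carried in the $\Precommit$ messages used at line~\ref{line:tab:onDecideRule}, and that the reduction "$r' \ge r_0$ without loss of generality'' is legitimate (it is, since neither the statement nor the argument privileges $p$ over $p'$).
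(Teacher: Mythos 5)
Your proof is correct and follows essentially the same route as the paper's: the $r'=r_0$ case via quorum intersection of the two $\Precommit$ sets, and the $r'>r_0$ case via the set $C$ of correct lockers, Lemma~\ref{lemma:locked-decision_value-prevote-v}, and an intersection argument on the $\Prevote$ quorum. The only (welcome) refinement is that you intersect the $2f+1$ $\Prevote$ set directly with the $(f+1)$-power set $C$ using $(2f+1)+(f+1)>n$, and explicitly flag that this slightly generalizes Lemma~\ref{lemma:majority-intersection}, a point the paper glosses over.
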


\begin{proof} Let round $r_0$ be the first round of height $h$ such that some
    correct process $p$ decides $v$. We now prove that if some correct process
    $q$ decides $v'$ in some round $r \ge r_0$, then $v = v'$.

In case $r = r_0$, $q$ has received at least $2f+1$
$\li{\Precommit,h_p,r_0,id(v')}$  messages at line~\ref{line:tab:onDecideRule},
while $p$ has received at least $2f+1$ $\li{\Precommit,h_p,r_0,id(v)}$
messages.  By Lemma~\ref{lemma:majority-intersection} two sets of messages of
voting power $2f+1$ intersect in at least one correct process.  As a correct
process sends a single $\Precommit$ message in a round, then $v=v'$.

We prove the case $r > r_0$ by contradiction. By the
rule~\ref{line:tab:onDecideRule}, $p$ has received at least $2f+1$ voting-power
equivalent of $\li{\Precommit,h_p,r_0,id(v)}$ messages, i.e., at least $f+1$
voting-power equivalent correct processes have locked value $v$ in round $r_0$ and have
sent those messages (i). Let denote this set of messages with $C$.  On the
other side, $q$ has received at least $2f+1$ voting power equivalent of
$\li{\Precommit,h_q, r,id(v')}$ messages. As the voting power of all faulty
processes is at most $f$, some correct process $c$ has sent one of those
messages. By the rule at line~\ref{line:tab:recvPrevote}, $c$ has locked value
$v'$ in round $r$ before sending $\li{\Precommit,h_q, r,id(v')}$. Therefore $c$
has received $2f+1$ $\Prevote$ messages for $id(v')$ in round $r > r_0$ (see
line~\ref{line:tab:recvPrevote}). By Lemma~\ref{lemma:majority-intersection}, a
process from the set $C$ has sent $\Prevote$ message for $id(v')$ in round $r$.
A contradiction with (i) and Lemma~\ref{lemma:locked-decision_value-prevote-v}.
\end{proof}	

\begin{lemma} \label{lemma:agreement} Algorithm~\ref{alg:tendermint} satisfies
Validity.  \end{lemma}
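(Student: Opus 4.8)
The plan is to show that whenever a correct process decides on a value $v$ (in the $\tadel$ rule at line~\ref{line:tab:onDecideRule}), the check $valid(v)$ in line~\ref{line:tab:validDecisionValue} guarantees immediately that the decided value satisfies the application predicate. So the core of the argument is essentially a one-line appeal to that explicit guard in the pseudo-code: a correct process only writes $decision_p[h_p] = v$ after verifying $valid(v) = \mathtt{true}$.

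First I would make precise what Validity means here: every value written into $decision_p$ by a correct process satisfies $valid()$. Then I would locate the only place in Algorithm~\ref{alg:tendermint} where $decision_p$ is assigned, namely line~\ref{line:tab:decide}, which sits inside the $\mathbf{if}\ valid(v)$ block at line~\ref{line:tab:validDecisionValue}. Since a correct process executes the pseudo-code faithfully, the assignment is reached only when $valid(v)$ holds, which is exactly the claim. I would also note, for completeness, that this works regardless of whether $v$ originated from an honest proposer via $getValue()$ (which by assumption returns a valid value) or was relayed as a $validValue$ — the receiver-side check makes the provenance irrelevant.

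The only subtlety worth spelling out is why the locally-checked predicate suffices for a \emph{global} validity property: because $valid()$ is a deterministic predicate on the value alone (as stipulated in Section~\ref{sec:consensus}), every correct process evaluates it identically, so a value that passes the check at the deciding process is valid in the sense required by the specification. There is no need to invoke Lemma~\ref{lemma:majority-intersection} or the locking lemmas at all; unlike Agreement, Validity does not depend on any quorum intersection reasoning.

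Consequently I expect there to be no real obstacle — the proof is short and structural. The one thing to be careful about is simply to cite the correct line numbers and to state the argument at the right level of rigor (a correct process follows the algorithm, hence the guard is respected), rather than over-engineering it. If the paper wanted a stronger statement — e.g., that a faulty proposer cannot cause a correct process to decide an invalid value — that too follows from the same guard, and I would mention it in one sentence.
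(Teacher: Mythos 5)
Your proof is correct and takes essentially the same route as the paper, which likewise disposes of Validity in one line by appealing to the $valid(v)$ guard at line~\ref{line:tab:validDecisionValue} protecting the assignment at line~\ref{line:tab:decide}. The extra remarks about provenance and the determinism of $valid()$ are fine but not needed.
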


\begin{proof} Trivially follows from the rule at line
\ref{line:tab:validDecisionValue} which ensures that only valid values can be
decided.  \end{proof}	

\begin{lemma} \label{lemma:round-synchronisation} If we assume that:
\begin{enumerate} 
    \item a correct process $p$ is the first correct process to
            enter a round $r>0$ at time $t > GST$ (for every correct process
            $c$, $round_c \le r$ at time $t$) 
    \item the proposer of round $r$ is
            a correct process $q$ 
    \item for every correct process $c$,
            $lockedRound_c \le validRound_q$ at time $t$ 
    \item $\timeoutPropose(r)
        > 2\Delta + \timeoutPrecommit(r-1)$, $\timeoutPrevote(r) > 2\Delta$ and
            $\timeoutPrecommit(r) > 2\Delta$, 
\end{enumerate} 
then all correct processes decide in round $r$ before $t + 4\Delta +
    \timeoutPrecommit(r-1)$.  
\end{lemma}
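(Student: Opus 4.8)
The plan is to trace a single "good" round $r$ step by step, following the timeout hierarchy in the hypotheses, and show that within the stated bounds every correct process (i) prevotes for the proposer's value $v$, (ii) precommits for $id(v)$, and (iii) decides $v$. First I would establish that all correct processes enter round $r$ within a narrow window. Process $p$ enters $r$ at time $t$; before entering it received either $2f+1$ $\Precommit$ messages for round $r-1$ (triggering $\timeoutPrecommit(r-1)$ via the rule at line~\ref{line:tab:startTimeoutPrecommit}, then line~\ref{line:tab:onTimeoutPrecommit}) or it skipped ahead via the $f+1$-rule at line~\ref{line:tab:skipRounds}. By the \emph{Gossip communication} property, any message $p$ received by time $t$ is received by all correct processes by $t+\Delta$, so every correct process has the messages needed to enter round $r$ by $t+\Delta$; accounting for the scheduled $\timeoutPrecommit(r-1)$ that some slower processes may still be waiting on, all correct processes have started round $r$ by time $t + \Delta + \timeoutPrecommit(r-1)$ at the latest. (Here I would be careful: $p$ is the \emph{first} correct process in round $r$, so no correct process is ahead, and hypothesis~4's $\timeoutPropose(r) > 2\Delta + \timeoutPrecommit(r-1)$ is exactly what guarantees the laggards' propose-timeout has not yet fired when the proposal arrives.)

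Next I would handle the propose step. The correct proposer $q$, upon entering round $r$, broadcasts $\li{\Proposal, h, r, v, validRound_q}$ where $v = validValue_q$ (or a fresh $getValue()$ if $validRound_q = -1$). By \emph{Gossip communication} this proposal reaches all correct processes within $\Delta$ of $q$ sending it, hence by roughly $t + 2\Delta$ (using that $q$ entered $r$ by $t+\Delta$). I then split on $validRound_q$: if it is $-1$, correct processes use the rule at line~\ref{line:tab:recvProposal}, and hypothesis~3 gives $lockedRound_c \le -1$, i.e.\ $lockedRound_c = -1$, so the acceptance condition at line~\ref{line:tab:accept-proposal-2} holds and they prevote $id(v)$; if $validRound_q = vr \ge 0$, then because $q$ is correct it genuinely observed $\Proposal$ and $2f+1$ $\Prevote$s for $v$ in round $vr$, and by \emph{Gossip communication} those supporting messages reach all correct processes within $\Delta$, so the rule at line~\ref{line:tab:acceptProposal} becomes enabled, and hypothesis~3 ($lockedRound_c \le vr$) makes the condition at line~\ref{line:tab:cond-prevote-higher-proposal} true, so again they prevote $id(v)$. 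In either case all correct processes have broadcast $\Prevote$ for $id(v)$ in round $r$ by about $t + 2\Delta$, provided their $\timeoutPropose(r)$ has not expired — which is guaranteed by hypothesis~4, since they entered by $t + \Delta + \timeoutPrecommit(r-1)$ and $\timeoutPropose(r) > 2\Delta + \timeoutPrecommit(r-1)$.

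Then I would push through the prevote and precommit steps. All $2f+1$ (indeed all correct) $\Prevote$s for $id(v)$ in round $r$ are received by every correct process within another $\Delta$, so by about $t + 3\Delta$. Each correct process at that point is in step $\prevote$ of round $r$ (it cannot have precommitted $\nil$ earlier: that would require $2f+1$ $\Prevote$ for $\nil$, impossible by Lemma~\ref{lemma:majority-intersection} since all correct processes prevoted $id(v)$, or $\timeoutPrevote(r)$ expiring, which hypothesis~4 rules out within this window since $\timeoutPrevote(r) > 2\Delta$ and the timeout only starts after the process has $2f+1$ prevotes and is in step $\prevote$). So the rule at line~\ref{line:tab:recvPrevote} fires: each correct process sets $lockedValue = v$, $lockedRound = r$, and broadcasts $\Precommit$ for $id(v)$. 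These precommits are received by all correct processes within a further $\Delta$, by about $t + 4\Delta$. By that time no correct process can have precommitted $\nil$ or moved past round $r$ (again by the $\timeoutPrecommit(r) > 2\Delta$ bound in hypothesis~4 and the fact that the precommit timeout starts only after $2f+1$ precommits are seen). Hence the rule at line~\ref{line:tab:onDecideRule} is enabled at every correct process: it has the $\Proposal$ for $v$ and $2f+1$ $\Precommit$s for $id(v)$, and $valid(v)$ holds, so it decides $v$.

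Finally I would collect the timing bookkeeping into the claimed bound $t + 4\Delta + \timeoutPrecommit(r-1)$: the worst-case start of round $r$ is $t + \Delta + \timeoutPrecommit(r-1)$ for the slowest correct process, after which proposal, prevote, and precommit each add one $\Delta$ of gossip latency (the proposer's own send is subsumed because $q$, being correct and not ahead of $p$, starts round $r$ no later than the others and the first hop is counted), giving at most $t + 4\Delta + \timeoutPrecommit(r-1)$; I would double-check that the three auxiliary timeouts of hypothesis~4 each dominate the $2\Delta$ sub-windows so that no premature $\nil$-precommit or round advance derails the argument. The main obstacle I anticipate is precisely this bookkeeping: carefully justifying, at each step, that the relevant timeout has \emph{not} fired for any correct process before the good event happens — this is where hypotheses~1 (first to enter) and~4 (timeout sizes) do all the work, and where an off-by-one $\Delta$ is easy to make — rather than the logical structure, which is a straightforward cascade of the \emph{Gossip communication} property plus hypothesis~3 feeding the prevote-acceptance conditions.
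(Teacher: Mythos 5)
Your proposal is correct and follows essentially the same route as the paper's proof: all correct processes enter round $r$ by $t+\Delta+\timeoutPrecommit(r-1)$, the proposal (with its supporting $\Prevote$ messages when $validRound_q\ge 0$) is accepted by every correct process thanks to hypothesis~3, and one $\Delta$ of gossip latency per step yields the $t+4\Delta+\timeoutPrecommit(r-1)$ bound, with hypothesis~4 ensuring no timeout fires prematurely. The only differences are cosmetic: the paper spells out the ``first process to trigger $\timeoutPrevote(r)$ (resp.\ $\timeoutPrecommit(r)$) at time $t_1$ (resp.\ $t_2$)'' arguments in more detail, and your intermediate remark that $q$ enters round $r$ by $t+\Delta$ should read $t+\Delta+\timeoutPrecommit(r-1)$, as your own final bookkeeping paragraph already reflects.
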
	

\begin{proof} As $p$ is the first correct process to enter round $r$, it
    executed the line~\ref{line:tab:nextRound} after $\timeoutPrecommit(r-1)$
    expired. Therefore, $p$ received $2f+1$ $\Precommit$ messages in the round
    $r-1$ before time $t$. By the \emph{Gossip communication} property, all
    correct processes will receive those messages the latest at time $t +
    \Delta$. Correct processes that are in rounds $< r-1$ at time $t$ will
    enter round $r-1$ (see the rule at line~\ref{line:tab:nextRound2}) and
    trigger $\timeoutPrecommit(r-1)$ (see rule~\ref{line:tab:startTimeoutPrecommit})
    by time $t+\Delta$. Therefore, all correct processes will start round $r$
    by time $t+\Delta+\timeoutPrecommit(r-1)$ (i).
 
In the worst case, the process $q$ is the last correct process to enter round
$r$, so $q$ starts round $r$ and sends $\Proposal$ message for some value $v$
at time $t + \Delta + \timeoutPrecommit(r-1)$. Therefore, all correct processes
receive the $\Proposal$ message from $q$ the latest by time $t + 2\Delta +
\timeoutPrecommit(r-1)$. Therefore, if $\timeoutPropose(r) > 2\Delta +
\timeoutPrecommit(r-1)$, all correct processes will receive $\Proposal$ message
before $\timeoutPropose(r)$ expires. 

By (3) and the rules at line~\ref{line:tab:recvProposal} and
\ref{line:tab:acceptProposal}, all correct processes will accept the
$\Proposal$ message for value $v$ and will send a $\Prevote$ message for
$id(v)$ by time $t + 2\Delta + \timeoutPrecommit(r-1)$.  Note that by the
\emph{Gossip communication} property, the $\Prevote$ messages needed to trigger
the rule at line~\ref{line:tab:acceptProposal} are received before time $t +
\Delta$.  

By time $t + 3\Delta + \timeoutPrecommit(r-1)$, all correct processes will receive
$\Proposal$ for $v$ and $2f+1$ corresponding $\Prevote$ messages for $id(v)$.
By the rule at line~\ref{line:tab:recvPrevote}, all correct processes will send
a $\Precommit$ message (see line~\ref{line:tab:precommit-v}) for $id(v)$ by
time $t + 3\Delta + \timeoutPrecommit(r-1)$. Therefore, by time $t + 4\Delta +
\timeoutPrecommit(r-1)$, all correct processes will have received the $\Proposal$
for $v$ and $2f+1$ $\Precommit$ messages for $id(v)$, so they decide at
line~\ref{line:tab:decide} on $v$. 

This scenario holds if every correct process $q$ sends a $\Precommit$ message
before $\timeoutPrevote(r)$ expires, and if $\timeoutPrecommit(r)$ does not expire
before $t + 4\Delta + \timeoutPrecommit(r-1)$.  Let's assume that a correct process
$c_1$ is the first correct process to trigger $\timeoutPrevote(r)$ (see the rule
at line~\ref{line:tab:recvAny2/3Prevote}) at time $t_1 > t$. This implies that
before time $t_1$, $c_1$ received a $\Proposal$ ($step_{c_1}$ must be
$\prevote$ by the rule at line~\ref{line:tab:recvAny2/3Prevote}) and a set of
$2f+1$ $\Prevote$ messages.  By time $t_1 + \Delta$, all correct processes will
receive those messages. Note that even if some correct process was in the
smaller round before time $t_1$, at time $t_1 + \Delta$ it will start round $r$
after receiving those messages (see the rule at
line~\ref{line:tab:skipRounds}).  Therefore, all correct processes will send
their $\Prevote$ message for $id(v)$ by time $t_1 + \Delta$, and all correct
processes will receive those messages the by time $t_1 + 2\Delta$.  Therefore,
as $\timeoutPrevote(r) > 2\Delta$, this ensures that all correct processes receive
$\Prevote$ messages from all correct processes before their respective local
$\timeoutPrevote(r)$ expire.   

On the other hand, $\timeoutPrecommit(r)$ is triggered in a correct process $c_2$
after it receives any set of $2f+1$ $\Precommit$ messages for the first time.
Let's denote with $t_2 > t$ the earliest point in time $\timeoutPrecommit(r)$ is
triggered in some correct process $c_2$. This implies that $c_2$ has received
at least $f+1$ $\Precommit$ messages for $id(v)$ from correct processes, i.e.,
those processes have received $\Proposal$ for $v$ and $2f+1$ $\Prevote$
messages for $id(v)$ before time $t_2$. By the \emph{Gossip communication}
property, all correct processes will receive those messages by time $t_2 +
\Delta$, and will send $\Precommit$ messages for $id(v)$. Note that even if
some correct processes were at time $t_2$ in a round smaller than $r$, by the
rule at line~\ref{line:tab:skipRounds} they will enter round $r$ by time $t_2 +
\Delta$.  Therefore, by time $t_2 + 2\Delta$, all correct processes will
receive $\Proposal$ for $v$ and $2f+1$ $\Precommit$ messages for $id(v)$. So if
$\timeoutPrecommit(r) > 2\Delta$, all correct processes will decide before the
timeout expires.         \end{proof}

\begin{lemma} \label{lemma:validValue} If a correct process $p$ locks a value
    $v$ at time $t_0 > GST$ in some round $r$ ($lockedValue = v$ and
    $lockedRound = r$) and $\timeoutPrecommit(r) > 2\Delta$, then all correct
    processes set $validValue$ to $v$ and $validRound$ to $r$ before starting
    round $r+1$.  \end{lemma}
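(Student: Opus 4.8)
\emph{Proof plan.} The plan is to follow the two pieces of evidence that caused $p$ to lock $v$, deliver them to every correct process via the \emph{Gossip communication} property, and then use a timing argument to show that no correct process can leave round $r$ before it has processed them.

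First I would unpack the hypothesis. Since $p$ locks $v$ in round $r$ at time $t_0>GST$, it does so by firing the rule at line~\ref{line:tab:recvPrevote}; hence at time $t_0$ the message log of $p$ already contains a $\li{\Proposal,h_p,r,v,*}$ message from $\coord(h_p,r)$ together with $\li{\Prevote,h_p,r,id(v)}$ messages of aggregate voting power at least $2f+1$, and $step_p=\prevote$. Two consequences of Lemma~\ref{lemma:majority-intersection} are used repeatedly: (a)~$v$ is the \emph{only} value for which $2f+1$ voting-power-equivalent $\Prevote$ messages can exist in round $r$ (a correct process in the intersection of two such quorums would have sent two different $\Prevote$ messages in round $r$), so whenever the rule at line~\ref{line:tab:recvPrevote} fires in round $r$ at a correct process it necessarily fires with value $v$; and (b)~no correct process can broadcast $\li{\Precommit,h_p,r,\nil}$ via the rule at line~\ref{line:tab:precommit-v-1}, since that rule requires $2f+1$ $\Prevote$s for $\nil$ in round $r$, which cannot coexist with the $2f+1$ $\Prevote$s for $id(v)$ that $p$ has already seen. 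Moreover, each of the two pieces of evidence was received by $p$ at some instant $\le t_0$, so --- since $t_0>GST$ --- the \emph{Gossip communication} property places all of them into the log of \emph{every} correct process before time $t_0+\Delta$.

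The remaining, and hardest, step is to show that every correct process is still in round $r$, at step $\ge\prevote$, at the instant (no later than $t_0+\Delta$) at which it holds these messages; then the rule at line~\ref{line:tab:recvPrevote} fires there with value $v$, and its body assigns $validValue_p\assign v$ and $validRound_p\assign r$ (lines~\ref{line:tab:setValidRound}--\ref{line:tab:setValidValue}) before the process ever calls $StartRound(r+1)$. A correct process leaves round $r$ only by deciding at height $h_p$ (line~\ref{line:tab:onDecideRule}) or by invoking $StartRound$ on a larger round, and the latter is triggered either by expiry of $\timeoutPrecommit(r)$ (line~\ref{line:tab:onTimeoutPrecommit}) or by the skipping rule at line~\ref{line:tab:skipRounds}; since a skip past round $r$ needs some correct process already in a round $>r$, the \emph{first} correct process to reach a round $>r$ at height $h_p$ does so because $\timeoutPrecommit(r)$ expired while it was in round $r$. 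That timeout is armed only once a process has collected $2f+1$ $\Precommit$ messages for round $r$ (line~\ref{line:tab:startTimeoutPrecommit}); by Lemma~\ref{lemma:majority-intersection} such a quorum shares a correct process $c_0$ with the $2f+1$ $\Prevote$-for-$id(v)$ quorum that $p$ observed. Using (b) to exclude line~\ref{line:tab:precommit-v-1}, a non-$\nil$ round-$r$ $\Precommit$ by $c_0$ could come only from line~\ref{line:tab:recvPrevote} (so it would itself be a lock of $v$ in round $r$), while a $\nil$ round-$r$ $\Precommit$ by $c_0$ could come only from the $\timeoutPrevote(r)$ timeout; in either case I would bound when $c_0$ could have emitted it and combine with the slack $\timeoutPrecommit(r)>2\Delta$ to conclude that $\timeoutPrecommit(r)$ cannot fire at any correct process before $t_0+\Delta$. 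Getting this timing bookkeeping exactly right --- so that the $2\Delta$ margin does precisely this job --- is the main obstacle.

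Finally, two corner cases need care. A correct process can be at step $\propose$ in round $r$ when the gossiped messages arrive; but a pending $\timeoutPropose(r)$ (or receipt of the $\Proposal$ carrying $validRound=-1$) moves it to step $\prevote$ while it is still in round $r$, by the bound just established, after which line~\ref{line:tab:recvPrevote} fires and sets $validValue$ and $validRound$ as claimed. And a correct process that decides in round $r$ decides on $v$ by (a) and then moves to height $h_p+1$; to reach that decision it must have received $2f+1$ $\Precommit$s for $id(v)$ in round $r$, which (again via line~\ref{line:tab:recvPrevote}) witness $f+1$ voting power of correct processes that locked $v$ in round $r$, and the gossiping argument applied to their $\Prevote$s for $id(v)$ shows the deciding process also passes the guard of line~\ref{line:tab:recvPrevote} in round $r$ before it leaves height $h_p$.
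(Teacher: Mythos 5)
Your overall strategy---gossip the $\Proposal$ and the $2f{+}1$ $\Prevote$ messages that caused $p$ to lock, and show that no correct process can leave round $r$ before it has processed them---is the same as the paper's, and your two quorum-intersection observations (uniqueness of the value admitting a $2f{+}1$ $\Prevote$ quorum in round $r$, impossibility of the $2f{+}1$-$\Prevote$-for-$\nil$ path) are correct and are used in the paper too. The problem is the step you yourself flag as ``the main obstacle'': it is not merely unfinished bookkeeping, the target you set for it is unattainable. You aim to show that $\timeoutPrecommit(r)$ cannot expire at any correct process before $t_0+\Delta$. That is false in general. Let $t$ be the earliest time any correct process holds a $2f{+}1$ $\Prevote$ quorum at step $\prevote$ in round $r$. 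The lock time $t_0$ can be as late as $t+\Delta+\timeoutPrevote(r)$ ($p$ may receive the quorum $\Delta$ late and lock only just before its own $\timeoutPrevote(r)$ expires), whereas a correct process whose $2f{+}1$ $\Precommit$ set contains a $\Precommit$ for $id(v)$ from a correct process $c$ may arm $\timeoutPrecommit(r)$ essentially at $c$'s send time and leave round $r$ around $t+\timeoutPrecommit(r)$, which is \emph{earlier} than $t_0+\Delta$ whenever $\timeoutPrecommit(r)<2\Delta+\timeoutPrevote(r)$. The hypothesis only gives $\timeoutPrecommit(r)>2\Delta$, so no uniform ``nobody leaves round $r$ before $t_0+\Delta$'' bound exists.

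The paper's proof therefore splits on which kind of correct $\Precommit$ the departing process's quorum contains. If it contains a correct $\Precommit$ for $\nil$, that message can only come from the $\timeoutPrevote(r)$-expiry path, hence not before $t+\timeoutPrevote(r)$, so departure is delayed until $t+\timeoutPrevote(r)+\timeoutPrecommit(r)>t_0+\Delta$; here the $2\Delta$ slack closes the gap, but only via the auxiliary bound $t_0\le t+\Delta+\timeoutPrevote(r)$, which your write-up never establishes. If instead it contains a correct $\Precommit$ for $id(v)$ from some $c$, departure may indeed happen before $t_0+\Delta$, and the lemma survives for a different reason: $c$ already held the $\Proposal$ and the $\Prevote$ quorum when it precommitted, so gossip anchored at $c$'s earlier reception time delivers that evidence to the departing process at least $\timeoutPrecommit(r)-\Delta>\Delta$ before it can leave, i.e., it has already executed the rule at line~\ref{line:tab:recvPrevote} and set $validValue$ and $validRound$. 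Your plan needs this disjunctive structure, plus the bound relating $t_0$ to $t$, to become a proof; as written it is a correct outline of the easy half with the decisive half missing. (Your two closing corner cases---a process still at step $\propose$, and a process that decides in round $r$---are real and handled sensibly; they are not the issue.)
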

 
\begin{proof} In order to prove this Lemma, we need to prove that if the
    process $p$ locks a value $v$ at time $t_0$, then no correct process will
    leave round $r$ before time $t_0 + \Delta$ (unless it has already set
    $validValue$ to $v$ and $validRound$ to $r$). It is sufficient to prove
    this, since by the \emph{Gossip communication} property the messages that
    $p$ received at time $t_0$ and that triggered rule at
    line~\ref{line:tab:recvPrevote} will be received by time $t_0 + \Delta$ by
    all correct processes, so all correct processes that are still in round $r$
    will set $validValue$ to $v$ and $validRound$ to $r$ (by the rule at
    line~\ref{line:tab:recvPrevote}). To prove this, we need to compute the
    earliest point in time a correct process could leave round $r$ without
    updating $validValue$ to $v$ and $validRound$ to $r$ (we denote this time
    with $t_1$). The Lemma is correct if $t_0 + \Delta < t_1$. 

If the process $p$ locks a value $v$ at time $t_0$, this implies that $p$
received the valid $\Proposal$ message for $v$ and $2f+1$
$\li{\Prevote,h,r,id(v)}$ at time $t_0$. At least $f+1$ of those messages are
sent by correct processes. Let's denote this set of correct processes as $C$. By
Lemma~\ref{lemma:majority-intersection} any set of $2f+1$ $\Prevote$ messages
in round $r$ contains at least a single message from the set $C$. 

Let's denote as time $t$ the earliest point in time a correct process, $c_1$, triggered
$\timeoutPrevote(r)$. This implies that $c_1$ received $2f+1$ $\Prevote$ messages
(see the rule at line \ref{line:tab:recvAny2/3Prevote}), where at least one of
those messages was sent by a process $c_2$ from the set $C$.  Therefore, process
$c_2$ had received $\Proposal$ message before time $t$. By the \emph{Gossip
communication} property, all correct processes will receive $\Proposal$ and
$2f+1$ $\Prevote$ messages for round $r$ by time $t+\Delta$. The latest point
in time $p$ will trigger $\timeoutPrevote(r)$ is $t+\Delta$\footnote{Note that
even if $p$ was in smaller round at time $t$ it will start round $r$ by time
$t+\Delta$.}.  So the latest point in time $p$ can lock the value $v$ in
round $r$ is $t_0 = t+\Delta+\timeoutPrevote(r)$ (as at this point
$\timeoutPrevote(r)$ expires, so a process sends $\Precommit$ $\nil$ and updates
$step$ to $\precommit$, see line \ref{line:tab:onTimeoutPrevote}).  

Note that according to the Algorithm \ref{alg:tendermint}, a correct process
can not send a $\Precommit$ message before receiving $2f+1$ $\Prevote$
messages.  Therefore, no correct process can send a $\Precommit$ message in
round $r$ before time $t$. If a correct process sends a $\Precommit$ message
for $\nil$, it implies that it has waited for the full duration of
$\timeoutPrevote(r)$ (see line
\ref{line:tab:precommit-nil-onTimeout})\footnote{The other case in which a
correct process $\Precommit$ for $\nil$ is after receiving $2f+1$ $Prevote$ for
$\nil$ messages, see the line \ref{line:tab:precommit-v-1}. By
Lemma~\ref{lemma:majority-intersection}, this is not possible in round $r$.}.
Therefore, no correct process can send $\Precommit$ for $\nil$ before time $t +
\timeoutPrevote(r)$ (*).

A correct process $q$ that enters round $r+1$ must wait (i) $\timeoutPrecommit(r)$
(see line \ref{line:tab:nextRound}) or (ii) receiving $f+1$ messages from the
round $r+1$ (see the line \ref{line:tab:skipRounds}).  In the former case, $q$
receives $2f+1$ $\Precommit$ messages before starting $\timeoutPrecommit(r)$. If
at least a single $\Precommit$ message from a correct process (at least $f+1$
voting power equivalent of those messages is sent by correct processes) is for
$\nil$, then $q$ cannot start round $r+1$ before time $t_1 = t +
\timeoutPrevote(r) + \timeoutPrecommit(r)$ (see (*)). Therefore in this case we have:
$t_0 + \Delta < t_1$, i.e., $t+2\Delta+\timeoutPrevote(r) <  t + \timeoutPrevote(r) +
\timeoutPrecommit(r)$, and this is true whenever $\timeoutPrecommit(r) > 2\Delta$, so
Lemma holds in this case. 

If in the set of $2f+1$ $\Precommit$ messages $q$ receives, there is at least a
single $\Precommit$ for $id(v)$ message from a correct process $c$, then $q$
can start the round $r+1$ the earliest at time $t_1 = t+\timeoutPrecommit(r)$. In
this case, by the \emph{Gossip communication} property, all correct processes
will receive $\Proposal$ and $2f+1$ $\Prevote$ messages (that $c$ received
before time $t$) the latest at time $t+\Delta$. Therefore, $q$ will set
$validValue$ to $v$ and $validRound$ to $r$ the latest at time $t+\Delta$. As
$t+\Delta < t+\timeoutPrecommit(r)$, whenever $\timeoutPrecommit(r) > \Delta$, the
Lemma holds also in this case.    

In case (ii), $q$ received at least a single message from a correct process $c$
from the round $r+1$. The earliest point in time $c$ could have started round
$r+1$ is $t+\timeoutPrecommit(r)$ in case it received a $\Precommit$ message for
$v$ from some correct process in the set of $2f+1$ $\Precommit$ messages it
received. The same reasoning as above holds also in this case, so $q$ set
$validValue$ to $v$ and $validRound$ to $r$ the latest by time $t+\Delta$. As
$t+\Delta < t+\timeoutPrecommit(r)$, whenever $\timeoutPrecommit(r) > \Delta$, the
Lemma holds also in this case.    \end{proof}

\begin{lemma} \label{lemma:agreement} Algorithm~\ref{alg:tendermint} satisfies
Termination.  \end{lemma}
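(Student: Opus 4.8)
The plan is to show that after GST the system necessarily reaches a round satisfying all four hypotheses of Lemma~\ref{lemma:round-synchronisation}, at which point that lemma delivers a decision at some common height, and then to lift this to all heights. First I would observe that if some correct process decides at height $h$, then by the \emph{Gossip communication} property the $\Proposal$ and $2f+1$ $\Prevote$ messages that justified its $\Precommit$, together with the $2f+1$ $\Precommit$ messages for $id(v)$, are received by every correct process within $\Delta$, so by the rule at line~\ref{line:tab:onDecideRule} every correct process decides $v$ at height $h$ and moves to height $h+1$. Hence it suffices to show that, starting from a configuration in which all correct processes are at the same height $h$ (which holds at height $0$, and is reestablished at every height by the previous sentence), some correct process eventually decides at height $h$.

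Next I would argue that, because of the timeout mechanism ($\timeoutPrecommit$ at line~\ref{line:tab:onTimeoutPrecommit} plus the $f+1$-message skip rule at line~\ref{line:tab:skipRounds}), no correct process can remain stuck in a round forever unless it has decided: if all correct processes are in round $r$ at height $h$ and none decides, each eventually triggers $\timeoutPrecommit(r)$ and enters $r+1$. So the rounds at height $h$ increase without bound across correct processes, and in particular infinitely many rounds $r$ are entered after GST. Now I would invoke the key structural fact, already argued informally in Section~\ref{sec:tendermint} and underpinned by Lemma~\ref{lemma:validValue}: at the start of every round there exists a correct process $c$ such that for every correct process $p$, either $validRound_c \ge lockedRound_p$ or $validValue_c = lockedValue_p$ — namely, take $c$ to be a correct process that locked in the highest round among all correct processes (if any correct process has locked at all), and note that by Lemma~\ref{lemma:validValue} its lock at time $t_0 > GST$ forces every correct process to set $validValue := lockedValue_c$ and $validRound := lockedRound_c$, so $validRound_c \ge lockedRound_p$ for all $p$; if no correct process ever locked, every $lockedRound_p = -1$ and any correct proposer works. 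Combined with the weighted round-robin proposer schedule, which makes every correct process proposer infinitely often, there is a round $r$ after GST whose proposer $q$ is such a process $c$, and moreover we may take $r$ large enough that $\timeoutPropose(r) > 2\Delta + \timeoutPrecommit(r-1)$, $\timeoutPrevote(r) > 2\Delta$ and $\timeoutPrecommit(r) > 2\Delta$, since these timeouts grow linearly in $r$ by $timeoutX(r) = initTimeoutX + r\cdot timeoutDelta$.

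Then I would pick the first correct process $p$ to enter this round $r$, at some time $t > GST$; hypotheses (1) and (2) of Lemma~\ref{lemma:round-synchronisation} hold by construction, (3) holds because $lockedRound_c \le validRound_q$ for the chosen $q$ (here one must be slightly careful: condition~(3) is stated as $lockedRound_c \le validRound_q$, which is exactly $validRound_q \ge lockedRound_c$; the disjunctive form $validValue_q = lockedValue_p$ also suffices because the acceptance rules at lines~\ref{line:tab:recvProposal} and~\ref{line:tab:acceptProposal} then let every correct process prevote $id(validValue_q)$), and (4) holds by the choice of $r$. Lemma~\ref{lemma:round-synchronisation} then gives that all correct processes decide in round $r$ at height $h$. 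Applying this inductively over heights yields Termination.

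The main obstacle I anticipate is making precise the ``there is always a correct process whose $validValue$/$validRound$ is acceptable to everyone'' invariant and showing it is preserved as rounds advance after GST — in particular handling the case where some correct process locks a \emph{new} value in a round between the chosen round and $r$, which could raise some $lockedRound_p$ above $validRound_q$. This is exactly where Lemma~\ref{lemma:validValue} is needed: any such lock happens after GST and immediately propagates, updating $validValue$/$validRound$ at all correct processes (including the future proposer), so the ``highest locker'' role simply passes to that process, and because every correct process is proposer infinitely often we still reach a good round. Pinning down this monotonicity argument, and confirming that the required timeout inequalities are eventually met simultaneously with a good proposer, is the crux; the rest is bookkeeping with the \emph{Gossip communication} property and the already-proved lemmas.
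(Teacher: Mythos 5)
Your proof is correct and follows essentially the same route as the paper's: identify a correct process whose $validRound$ dominates every correct process's $lockedRound$, wait for a post-GST round in which it is the proposer, apply Lemma~\ref{lemma:round-synchronisation}, and use Lemma~\ref{lemma:validValue} to handle any intermediate lock by passing the ``highest locker'' role along. Your version is in fact somewhat more explicit than the paper's (on the height-by-height lifting, the unbounded round progression, and the eventual satisfaction of the timeout inequalities), but the decomposition and the key lemmas invoked are the same.
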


\begin{proof} Lemma~\ref{lemma:round-synchronisation} defines a scenario in
    which all correct processes decide. We now prove that within a bounded
    duration after GST such a scenario will unfold. Let's assume that at time
    $GST$ the highest round started by a correct process is $r_0$, and that
    there exists a correct process $p$ such that the following holds: for every
    correct process $c$, $lockedRound_c \le validRound_p$. Furthermore, we
    assume that $p$ will be the proposer in some round $r_1 > r$ (this is
    ensured by the $\coord$ function). 

We have two cases to consider. In the first case, for all rounds $r \ge r_0$
and $r < r_1$, no correct process locks a value (set $lockedRound$ to $r$). So
in round $r_1$ we have the scenario from the
Lemma~\ref{lemma:round-synchronisation}, so all correct processes decides in
round $r_1$.  

In the second case, a correct process locks a value $v$ in round $r_2$, where
$r_2 \ge r_0$ and $r_2 < r_1$.  Let's assume that $r_2$ is the highest round
before $r_1$ in which some correct process $q$ locks a value. By Lemma
\ref{lemma:validValue} at the end of round $r_2$ the following holds for all
correct processes $c$: $validValue_c = lockedValue_q$ and $validRound_c = r_2$.
Then in round $r_1$, the conditions for the
Lemma~\ref{lemma:round-synchronisation} holds, so all correct processes decide.
\end{proof}

\section{Conclusion} \label{sec:conclusion}

We have proposed a new Byzantine-fault tolerant consensus algorithm that is the
core of the Tendermint BFT SMR platform. The algorithm is designed for the wide
area network with high number of mutually distrusted nodes that communicate
over gossip based peer-to-peer network. It has only a single mode of execution
and the communication pattern is very similar to the "normal" case of the
state-of-the art PBFT algorithm. The algorithm ensures termination with a novel
mechanism that takes advantage of the gossip based communication between nodes.
The proposed algorithm and the proofs are simple and elegant, and we believe
that this makes it easier to understand and implement correctly.   

\section*{Acknowledgment}

We would like to thank Anton Kaliaev, Ismail Khoffi and Dahlia Malkhi for comments on an earlier version of the paper. We also want to thank Marko Vukolic, Ming Chuan Lin, Maria Potop-Butucaru, Sara Tucci, Antonella Del Pozzo and Yackolley Amoussou-Guenou for pointing out the liveness issues
in the previous version of the algorithm. Finally, we want to thank the Tendermint team members and all project contributors for making Tendermint such a great platform.

\bibliographystyle{IEEEtran}
\bibliography{lit}

% Generated by IEEEtran.bst, version: 1.12 (2007/01/11)
\begin{thebibliography}{10}
\providecommand{\url}[1]{#1}
\csname url@samestyle\endcsname
\providecommand{\newblock}{\relax}
\providecommand{\bibinfo}[2]{#2}
\providecommand{\BIBentrySTDinterwordspacing}{\spaceskip=0pt\relax}
\providecommand{\BIBentryALTinterwordstretchfactor}{4}
\providecommand{\BIBentryALTinterwordspacing}{\spaceskip=\fontdimen2\font plus
\BIBentryALTinterwordstretchfactor\fontdimen3\font minus
  \fontdimen4\font\relax}
\providecommand{\BIBforeignlanguage}[2]{{%
\expandafter\ifx\csname l@#1\endcsname\relax
\typeout{** WARNING: IEEEtran.bst: No hyphenation pattern has been}%
\typeout{** loaded for the language `#1'. Using the pattern for}%
\typeout{** the default language instead.}%
\else
\language=\csname l@#1\endcsname
\fi
#2}}
\providecommand{\BIBdecl}{\relax}
\BIBdecl

\bibitem{Lam78:cacm}
L.~Lamport, ``Time, clocks, and the ordering of events in a distributed
  system,'' \emph{Commun. ACM}, 1978.

\bibitem{Sch90:survey}
F.~B. Schneider, ``Implementing fault-tolerant services using the state machine
  approach: a tutorial,'' \emph{ACM Comput. Surv.}, vol.~22, no.~4, Dec. 1990.

\bibitem{Bur:osdi06}
M.~Burrows, ``The chubby lock service for loosely-coupled distributed
  systems,'' in \emph{OSDI}, 2006, pp. 335--350.

\bibitem{Nak2012:bitcoin}
\BIBentryALTinterwordspacing
S.~Nakamoto, ``Bitcoin: A peer-to-peer electronic cash system,'' 2009.
  [Online]. Available: \url{http://www.bitcoin.org/bitcoin.pdf}
\BIBentrySTDinterwordspacing

\bibitem{But2014:ethereum}
\BIBentryALTinterwordspacing
V.~Buterin, ``Ethereum: A next-generation smart contract and decentralized
  application platform,''
  \url{https://github.com/ethereum/wiki/wiki/White-Paper}, 2014, accessed:
  2018-07-11. [Online]. Available:
  \url{https://github.com/ethereum/wiki/wiki/White-Paper}
\BIBentrySTDinterwordspacing

\bibitem{DLS88:jacm}
C.~Dwork, N.~Lynch, and L.~Stockmeyer, ``Consensus in the presence of partial
  synchrony,'' \emph{JACM}, 1988.

\bibitem{CL02:tcs}
M.~Castro and B.~Liskov, ``Practical byzantine fault tolerance and proactive
  recovery,'' \emph{ACMTCS}, 2002.

\bibitem{CL99:osdi}
------, ``Practical byzantine fault tolerance and proactive recovery,'' in
  \emph{Proceedings of the 3rd Symposium on Operating Systems Design and
  Implementation}, Feb. 1999.

\bibitem{MHS09:opodis}
Z.~Milosevic, M.~Hutle, and A.~Schiper, ``Unifying {B}yzantine consensus
  algorithms with {W}eak {I}nteractive {C}onsistency,'' to appear in OPODIS
  2009.

\bibitem{RMS10:dsn}
O.~R{\"u}tti, Z.~Milosevic, and A.~Schiper, ``Generic construction of consensus
  algorithms for benign and byzantine faults,'' in \emph{DSN}, 2010, pp.
  343--352.

\bibitem{MA06:tdsc}
J.-P. Martin and L.~Alvisi, ``Fast {B}yzantine consensus,'' \emph{TDSC}, 2006.

\bibitem{Ver09:spinning}
G.~S. Veronese, M.~Correia, A.~N. Bessani, and L.~C. Lung, ``Spin one's wheels?
  byzantine fault tolerance with a spinning primary,'' in \emph{SRDS}, 2009.

\bibitem{Cle09:aardvark}
A.~Clement, E.~Wong, L.~Alvisi, M.~Dahlin, and M.~Marchetti, ``Making byzantine
  fault tolerant systems tolerate byzantine faults,'' in \emph{NSDI}, 2009, pp.
  153--168.

\bibitem{BLS2001:crypto}
\BIBentryALTinterwordspacing
D.~Boneh, B.~Lynn, and H.~Shacham, ``Short signatures from the weil pairing,''
  in \emph{Proceedings of the 7th International Conference on the Theory and
  Application of Cryptology and Information Security: Advances in Cryptology},
  ser. ASIACRYPT '01.\hskip 1em plus 0.5em minus 0.4em\relax Berlin,
  Heidelberg: Springer-Verlag, 2001, pp. 514--532. [Online]. Available:
  \url{http://dl.acm.org/citation.cfm?id=647097.717005}
\BIBentrySTDinterwordspacing

\bibitem{Gue2018:sbft}
\BIBentryALTinterwordspacing
G.~Golan{-}Gueta, I.~Abraham, S.~Grossman, D.~Malkhi, B.~Pinkas, M.~K. Reiter,
  D.~Seredinschi, O.~Tamir, and A.~Tomescu, ``{SBFT:} a scalable decentralized
  trust infrastructure for blockchains,'' \emph{CoRR}, vol. abs/1804.01626,
  2018. [Online]. Available: \url{http://arxiv.org/abs/1804.01626}
\BIBentrySTDinterwordspacing

\bibitem{Dem1987:gossip}
\BIBentryALTinterwordspacing
A.~Demers, D.~Greene, C.~Hauser, W.~Irish, J.~Larson, S.~Shenker, H.~Sturgis,
  D.~Swinehart, and D.~Terry, ``Epidemic algorithms for replicated database
  maintenance,'' in \emph{Proceedings of the Sixth Annual ACM Symposium on
  Principles of Distributed Computing}, ser. PODC '87.\hskip 1em plus 0.5em
  minus 0.4em\relax New York, NY, USA: ACM, 1987, pp. 1--12. [Online].
  Available: \url{http://doi.acm.org/10.1145/41840.41841}
\BIBentrySTDinterwordspacing

\bibitem{GLR17:red-belly-bc}
\BIBentryALTinterwordspacing
T.~Crain, V.~Gramoli, M.~Larrea, and M.~Raynal,
  ``Leader/randomization/signature-free byzantine consensus for consortium
  blockchains,'' \emph{CoRR}, vol. abs/1702.03068, 2017. [Online]. Available:
  \url{http://arxiv.org/abs/1702.03068}
\BIBentrySTDinterwordspacing

\end{thebibliography}

%\appendix

\end{document}